\journal{}
\newtheorem{theorem}{Theorem}[section]
\newtheorem{remark}[theorem]{Remark}
\newtheorem{define}[theorem]{Definition}
\newtheorem{example}[theorem]{Example}
\def\gf{{\mathbb{F}}}
\def\N{{\mathbb{N}}}
\def\k{{\rm K}}
\def\M{{\bf M}}
\def\F{{\bf F}}
\def\lm{{\rm lm}}
\def\lc{{\rm lc}}
\def\e{{\bf e}}
\def\u{{\bf u}}
\def\v{{\bf v}}
\def\lcm{{\rm lcm}}
\def\deg{{\rm deg}}
\def\max{{\rm max}}
\def\zero{{\bf 0}}
\def\d{{\tt deg}}
\def\jpset{{\tt JP}}
\def\done{{\tt Done}}
\def\goto{{\tt Goto}}
\def\step{{\tt Step}}
\def\ind{{\tt index}}
\def\rem{{\tt Rem}}
\def\jpd{{\tt JPdeg}}
\def\sym{{\tt SymbolicProcess} }
\def\eli{{\tt Elimination} }
\def\lla{{\longleftarrow}}
\newcommand{\spc}{\hspace*{15pt}}
\newcommand{\comment}[1]{}
\newcommand{\ignore}[1]{}
\begin{document}

\begin{frontmatter}

%% Title, authors and addresses

%% use the tnoteref command within \title for footnotes;
%% use the tnotetext command for the associated footnote;
%% use the fnref command within \author or \address for footnotes;
%% use the fntext command for the associated footnote;
%% use the corref command within \author for corresponding author footnotes;
%% use the cortext command for the associated footnote;
%% use the ead command for the email address,
%% and the form \ead[url] for the home page:
%%
%% \title{Title\tnoteref{label1}}
%% \tnotetext[label1]{}
%% \author{Name\corref{cor1}\fnref{label2}}
%% \ead{email address}
%% \ead[url]{home page}
%% \fntext[label2]{}
%% \cortext[cor1]{}
%% \address{Address\fnref{label3}}
%% \fntext[label3]{}

\title{An Improvement over the GVW Algorithm for Inhomogeneous Polynomial Systems\tnoteref{sponsor}
}

\tnotetext[sponsor]{The authors are supported by National Key Basic Research Program of China (No. 2013CB834203), National Natrue Science Foundation of China (No. 11301523), the Strategic Priority Research Program of the Chinese Academy of Sciences (No. XDA06010701), and IEE's Research Project on Cryptography (No. Y3Z0013102).}

%% use optional labels to link authors explicitly to addresses:
%% \author[label1,label2]{<author name>}
%% \address[label1]{<address>}
%% \address[label2]{<address>}

\author[sklois]{Yao Sun}
\ead{sunyao@iie.ac.cn}

%\author[sklois]{Zhenyu Huang}
%\ead{huangzhenyu@iie.ac.cn}

\author[sklois]{Dongdai Lin}
\ead{ddlin@iie.ac.cn}

\author[klmm]{Dingkang Wang}
\ead{dwang@mmrc.iss.ac.cn}

\address[sklois]{SKLOIS, Institute of Information Engineering, CAS, Beijing 100093,  China}

\address[klmm]{KLMM, Academy of Mathematics and Systems Science, CAS, Beijing 100190, China}

\begin{abstract}
%% Text of abstract
The GVW algorithm is a signature-based algorithm for computing Gr\"obner bases. If the input system is not homogeneous, some J-pairs with higher signatures but lower degrees are rejected by GVW's Syzygy Criterion, instead, GVW have to compute some J-pairs with lower signatures but higher degrees. Consequently, degrees of polynomials appearing during the  computations may unnecessarily grow up higher and the computation become more expensive. In this paper, a variant of the GVW algorithm, called M-GVW, is proposed and mutant pairs are introduced to overcome inconveniences brought by inhomogeneous input polynomials. Some techniques from linear algebra are used to improve the efficiency. Both GVW and M-GVW have been implemented in C++ and tested by many examples from  boolean polynomial rings. The timings show M-GVW usually performs much better than the original GVW algorithm when mutant pairs are found. Besides,  M-GVW is also  compared with intrinsic Gr\"obner bases functions on Maple, Singular and Magma. Due to the efficient routines from the M4RI library, the experimental results show that M-GVW is very efficient.
\end{abstract}

\begin{keyword}
%% keywords here, in the form: keyword \sep keyword
Gr\"obner basis, the GVW algorithm, signature-based algorithm,  linear algebra, boolean polynomial ring.

%% MSC codes here, in the form: \MSC code \sep code
%% or \MSC[2008] code \sep code (2000 is the default)

\end{keyword}

\end{frontmatter}

% \linenumbers

%% main text

\section{Introduction}
Gr\"obner bases, proposed by Buchberger in 1965 \citep{Buchberger65}, have been proven to be very useful in many aspects of algebra. In the past forty years, many efficient algorithms have been proposed to compute Gr\"obner bases. One important improvement is that Lazard pointed out the strong relation between Gr\"obner bases and linear algebra \citep{Lazard83}. This idea has been implemented  in F4 by Faug\`ere\citep{Fau99}, and also as XL type algorithms by Courtois et al. \citep{Courtois00} and Ding et al. \citep{Ding08}.

%Up to now, Faug\`ere's F4 and F5 have been proved to be the most efficient algorithms for computing Gr\"obner bases \citep{Fau99, Fau02, Fau03}.
Faug\`ere introduced the concept of signatures for polynomials and presented the famous F5 algorithm \citep{Fau02}.
Since then, signature-based algorithms have been widely investigated, and several variants of F5  have been presented, including F5C \citep{Eder09}, extended F5 \citep{Ars09}, F5 with revised criterion (the AP algorithm) \citep{Arri11}, and RB \citep{Eder13b}.  Gao et al. proposed another signature based algorithm G2V \citep{Gao09} in a different way from F5, and GVW\citep{Gao10} (which is unpublished) is an extended version of G2V. The authors also studied generalized criteria and signature-based algorithms in solvable polynomial algebra in \citep{SunWang11, SunWang12}.

In GVW, criteria reject J-pairs with higher signatures, and process J-pairs with lower signatures instead. However, when input systems are inhomogeneous, J-pairs with higher signatures do not always have higher degrees, where by saying degrees of polynomials, we mean the total degrees of polynomials.  This is not good for Gr\"obner basis computations, and particularly not good for an implementation of GVW using linear algebra, because as suggested by Faug\`ere in \citep{Fau99, Fau02}, a good strategy of dealing with critical pairs (equivalent to J-pairs in GVW) in a batch is to select all critical pairs with the minimal degree. The reason is that critical pairs with higher degrees usually lead to larger matrices, which will cost much more time for eliminations. Some influences of inhomogeneous input systems were also discussed by Eder \citep{Eder13}.

We find that with GVW's Syzygy Criterion is possible to reject J-pairs with higher signatures but lower degrees such that GVW has to compute J-pairs with lower signatures but higher degrees.  After analysis,  we find  that such phenomenons are caused by some mutant  pairs, which will be defined in Section \ref{sec_vgvw}, and then  we propose a variant of the GVW algorithm (called M-GVW).
In M-GVW, when mutant pairs are found during the computations, we will append them to the initial input system  and assign new signatures to such mutant pairs. In this way, J-pairs generated by mutant pairs will not be all rejected by GVW's Syzygy Criterion, and hence, the maximal degree of polynomials appearing in the computations will not become too high. Particularly, for homogeneous polynomial systems, no mutant pairs will be generated and M-GVW is exactly the GVW algorithm.

 For implementations of signature-based algorithms, Roune and Stillman  efficiently implemented  GVW and  AP without using linear algebra \citep{Roune12}. Faug\`ere mentioned a matrix F5 in \citep{Fau09}.  A matrix F5 was described in more detail in an unpublished paper by Albrecht and Perry \citep{Albrecht10}.
%Our implementation has a similar frame as Albrecht and Perry's algorithm, but different in {\em Symbolic Processing} procedure and %methods for eliminations. We will discuss in more detail in Section \ref{sec_implementation}.
%Faug\`ere's matrix-F5 algorithm implementation is based on efficient arithmetics of sparse matrices.
We have implemented both the original GVW and M-GVW with linear algebra over boolean polynomial rings. For eliminations of matrices, we hope to take the advantage of  fast arithmetics of dense matrices over GF(2) provided by the library M4RI \citep{Albrecht13}. However, reductions in signature-based algorithms must be done in one direction, i.e. rows with higher signatures can only be eliminated by rows with lower signatures. So functions from M4RI cannot be used directly. We propose a method to do such one-direction eliminations for dense matrices by modifying  functions from M4RI in our implementations.

The timings show M-GVW usually performs much better than the original GVW algorithm when mutant pairs are found. Besides,  M-GVW is also  compared with intrinsic Gr\"obner bases functions on Maple, Singular and Magma. The experimental results show that  M-GVW is very efficient.

This paper is organized as follows. In Section \ref{sec_vgvw}, we revisit the GVW algorithm and present M-GVW on a theoretical level. In Section \ref{sec_implementation}, we discuss details on implementing M-GVW over boolean polynomial rings. Some experimental results are shown in Section \ref{sec_timings}. Conclusion remarks follow in Section \ref{sec_conclusion}.

\section{A variant of the GVW algorithm} \label{sec_vgvw}

In this section, we present a variant of GVW (M-GVW) in theoretical level.  We give this new algorithm over general polynomial rings, i.e.  with no special assumptions on ground fields and monomial orderings.
%Please note that when discussing implementing details in Section \ref{sec_implementation}, we will specialize the ground field as %well as the monomial orderings.

%  as the Galois field $GF(2)$ and specialize the monomial ordering as the Graded Reverse Lexicographic ordering.

%Firstly, we introduce some notations and revisit the GVW algorithm in Subsection \ref{subsec_gvw}. Next, we talk about our motivations and main ideas in Subsection \ref{subsec_motivation}, through a toy example. Finally, theoretical descriptions of this variant algorithm is presented in Subsection \ref{subsec_vgvw}.

\subsection{The GVW algorithm revisited} \label{subsec_gvw}

Most of notations and definitions are inherited from Gao et al.'s original paper. For more details, please see \citep{Gao10}.

Let $R=\k[x_1, \ldots, x_n]$ be a polynomial  ring over a field $\k$ with $n$ variables, and $\{f_1, \cdots, f_m\}$ is a finite subset of $R$. We want to compute a Gr\"obner basis for the ideal
$$I = \langle f_1, \cdots, f_m\rangle = \{p_1f_1+\cdots+p_mf_m \mid p_1,\cdots,p_m \in R\}$$ with respect to some monomial ordering on $R$.

Let $\F=(f_1, \cdots, f_m) \in R^m$, and consider the following $R$-module of $R^m\times R$: $$\M = \{(\u, f) \in R^m \times R \mid \u \cdot \F=f\}.$$ Let $\e_i$ be the $i$-th unit vector of $R^m$, i.e. $(\e_i)_j=\delta_{ij}$ where $\delta_{ij}$ is the Kronecker delta. Then the $R$-module $\M$ is generated by $\{(\e_1, f_1), \cdots, (\e_m, f_m)\}.$ %The $R$-module $\M$ is first introduced to describe signature-based algorithms by Gao et al. in \citep{Gao10}.

A monomial in $R$ has the form $x^\alpha = \Pi_{i=1}^n x_i^{a_i}$, where $\alpha = (a_1, \ldots, a_n) \in \N^n$ and $\N$ is the set of all non-negative integers. A monomial in $R^m$ is of the form $x^\alpha\e_i$, where $1 \le i \le m$ and $\alpha \in \N^n$. For monomials in $R^n$, we say $x^\alpha \e_i$ divides $x^\beta \e_j$ (or $x^\alpha \e_i \mid x^\beta \e_j$ for short), if $i = j$ and $x^\alpha$ divides $x^\beta$, and the quotient is defined as $(x^\beta\e_i)/(x^\alpha\e_i) = x^{\beta - \alpha}\in R$.

Fix any monomial ordering $\prec_p$ on $R$ and any monomial ordering $\prec_s$ on $R^m$ (subscripts $p$ and $s$ stand for {\em polynomial} and {\em signature} respectively). Please note that $\prec_s$ may or may not be related to $\prec_p$ in theory, although we always assume $\prec_s$ is {\bf compatible} with $\prec_p$ practically, i.e. $x^\alpha \prec_p x^\beta$ if and only if $x^\alpha\e_i \prec_s x^\beta \e_i$ for $1 \le i \le m$. To make descriptions simpler, we use the following notations for leading monomials: $$\lm(f)=\lm_{\prec_p}(f) \mbox{ and } \lm(\u)=\lm_{\prec_s}(\u),$$ for any $f\in R$ and any $\u\in R^m$. Leading monomials of $f\in R$ and $\u\in R^m$ are monomials without coefficients in $R$ and $R^m$ respectively. We define $\lm(f)=0$ if $f=0$, and $0 \prec_p x^\alpha$ for any non-zero monomial $x^\alpha$ in $R$; similarly for monomials in $R^m$. In the rest of this paper, we use $\prec$ to represent $\prec_p$ and $\prec_s$ for short, if no confusion occurs.

For a pair $(\u, f)\in \M$,  $\lm(\u)$ is called the {\bf signature} of $(\u, f)$. This definition is the same as that used in GVW, but different from those used in \citep{Fau02, Arri11}. The difference is discussed in \citep{Gao10}.

Let $(\u, f)\in \M$ and $B\subset \M$, we say $(\u, f)$ is {\bf top-reducible} by $B$, if there exists $(\v, g)\in B$ with $g\not=0$, such that $\lm(g)$ divides $\lm(f)$ and $\lm(\u) \succeq \lm(t\v)$ where $t=\lm(f)/\lm(g)$. The corresponding {\bf top-reduction} is then $$(\u, f) - ct(\v, g) = (\u - ct\v, f - ctg),$$ where $c = \lc(f)/\lc(g)$. Particularly, this top-reduction is called {\bf regular}, if $\lm(\u) \succ \lm(t\v)$; and {\bf super} if $\lm(\u) = \lm(t\v)$.\footnote{Regular top-reduction defined here is slightly different from its original version in \citep{Gao10}, but this will not affect proofs of related propositions and theorems.} Clearly, $(\u - c t \v, f-c t g)$ is also an element in $\M$.

A subset $G$ of $\M$ is called a {\bf strong Gr\"obner basis} for $\M$ if every nonzero pair $($pairs $\not=(\zero, 0))$ in $\M$ is top-reducible by  $G$. By Proposition 2.2 of \citep{Gao10}, let $G = \{(\v_i, g_i) \mid 1\le i \le s\}$ be a strong Gr\"obner basis for $\M$. Then $\{g_i: 1\le i \le s\}$ is a Gr\"obner basis for $I = \langle f_1, \ldots, f_m\rangle$.

Next, we define {\em joint pairs/J-pairs}. Suppose $(\u, f), (\v, g)\in \M$ are two pairs with $f$ and $g$ both nonzero. Let $t=\lcm(\lm(f), \lm(g))$, $t_f=t/\lm(f)$ and $t_g=t/\lm(g)$. Then the {\bf J-pair} of $(\u, f)$ and $(\v, g)$ is defined as: $t_f(\u, f)$ (or $t_g(\v, g)$),  if $\lm(t_f\u) \succ \lm(t_g\v)$ (or $\lm(t_f\u) \prec \lm(t_g\v)$). For the case $\lm(t_f\u) = \lm(t_g\v)$, the J-pair is not defined. Note that the J-pair of $(\u, f), (\v, g)\in \M$ is also a pair in $\M$. Assume $t_f(\u, f)$ is the J-pair of $(\u, f)$ and $(\v, g)$, the {\bf degree} of $t_f(\u, f)$ is defined as $\deg(t_f f)$, i.e. the degree of the polynomial part. For convenience, we call a J-pair is of $G \subset \M$, if this J-pair is the J-pair of two pairs in $G$.

For a pair $(\u, f) \in \M$ and a set $G \subset \M$, we say $(\u, f)$ is {\bf covered} by $G$, if there is a pair $(\v, g)\in G$, such that $\lm(\v)$ divides $\lm(\u)$ and $t\lm(g) \prec \lm(f)$ $($strictly smaller$)$ where $t = \lm(\u)/\lm(\v)$.

%\begin{theorem}[\textcircled{a} and \textcircled{c} of Thm. 2.4 in \citep{Gao10}] \label{thm_gvw}
%Suppose $G$ is a subset of $\M$ such that, for any monomial $T \in R^m$, there is a pair $(\v, g)\in G$ and a monomial $t$ such that $T=t\lm(\v)$. Then $G$ is a strong Gr\"obner basis for $\M$ if and only if every J-pair of $G$ is covered by $G$.
%\end{theorem}

Two criteria are used in the GVW algorithm.\smallskip

\noindent {\bf [Syzygy Criterion]}
For a J-pair $t_f(\u, f)$ of a set $G\in \M$, if there exist $(\v, 0)\in G$ such that $\lm(\v)$ divides $t_f\lm(\u)$, then this J-pair can be discarded.\smallskip

\noindent {\bf [Second Criterion]}
For a J-pair of a set $G\in \M$, if this J-pair is covered by $G$, then this J-pair can be discarded.\smallskip

In this paper, we call the second criterion {\bf Rewriting Criterion}. Arri and Perry proposed a quite similar criterion to Rewriting Criterion in \citep{Arri11}.  Comments on Arri-Perry's criterion and Rewriting Criterion can be found in \citep{Gao10, Roune12}.
%We will make no comments in this paper.

The following GVW algorithm is slightly modified from its original version. We delete the output of a Gr\"obner basis for the syzygy module of input polynomials, because we only care about the Gr\"obner basis of input polynomials in current paper. We emphasize that for a pair $(\u, f) \in \M$, only $(\lm(\u), f)$ is stored in the latest version of the GVW algorithm. Related conceptions, such as {\em top-reduction, J-pairs} and {\em cover}, are defined similarly. Please see \citep{Gao10} for more details.

\begin{algorithm}[!ht]
\DontPrintSemicolon
\SetAlgoSkip{}
%\SetAlgoNoLine
\LinesNumbered

\SetKwInOut{Input}{Input}
\SetKwInOut{Output}{Output}
\SetKwFor{For}{for}{do}{end\ for}
\SetKwIF{If}{ElseIf}{Else}{if}{then}{else\ if}{else}{end\ if}

\Input{$f_1, \ldots, f_m \in R = K[x_1, \ldots, x_n]$, monomial orderings for $R$ and $R^m$.}

\Output{A Gr\"obner basis of $I = \langle f_1,\ldots, f_m\rangle$.}

\BlankLine

\Begin{

$H \lla \{\lm(f_j\e_i - f_i\e_j) \mid 1 \le i, j \le m\}$

$G \lla \{(\lm(\e_i), f_i) \mid 1\le i \le m\}$

$\jpset \lla \{$all J-pairs of $G\}$

\While{$\jpset \not= \emptyset$}
{
Let $t(x^\alpha\e_i, f) \in \jpset$ \hspace{2cm} $(\bigstar)$

$\jpset \lla \jpset \setminus \{t(x^\alpha\e_i, f)\}$

\If{$tx^\alpha\e_i$ is divisible by monomials in $H$}{\goto \step 5 (Syzygy Criterion)}

\If{$t(x^\alpha\e_i, f)$ is covered by $G$}{\goto \step 5 (Rewriting Criterion)}

$(x^\gamma\e_i, h) \lla${\em Regular} top-reduce $(tx^\alpha\e_i, tf)$ by $G$.

\If{$h = 0$}
{$H \lla H \cup \{x^\gamma\e_i\}$

\goto \step 5
}

\For{$(x^\beta\e_j, g) \in G$ s.t. $\lm(g)x^\gamma\e_i\not= \lm(h)x^\beta\e_j$}
{$H \lla H \cup \{\max(\lm(g)x^\gamma\e_i, \lm(h)x^\beta\e_j) \}$

$\jpset \lla \jpset \cup \{$J-pair of $(x^\gamma\e_i, h)$ and $(x^\beta\e_j, g)\}$
}
$G \lla G \cup \{(x^\gamma\e_i, h)\}$
}

{\bf return} $\{g \mid (x^\beta\e_j, g)\in G\}$
}
\caption{The GVW algorithm}
\end{algorithm}

%%%%%%%%%%%%%%%%%%%%%

There are some remarks on the GVW algorithm.

\begin{enumerate}

\item At Step 6 (marked with black star), a J-pair can be selected from JP in {\bf any} order. In Section \ref{sec_implementation}, we prefer to choosing J-pairs with minimal degrees first.

\item Proposition 2.2 in \citep{Gao10} ensures correctness of GVW when J-pairs are computed in {\bf any} order.

\item The finite termination of GVW is proved by Theorem 3.1 in \citep{Gao10} when monomial orderings of $R$ and $R^m$ are compatible. Particularly, GVW also terminates in finite steps when J-pairs are computed in {\bf any} order. This proof is first given by Theorem 3.5 in \citep{SunWang12}.

\item The GVW algorithm in \citep{Gao10} retains only one J-pair (the one with the minimal polynomial part) when there are several J-pairs having the same signature. This process can be implied by the ``cover check'' at step 10.

\end{enumerate}

\subsection{Motivation and main ideas} \label{subsec_motivation}

The motivation of varying GVW arises when we are implementing GVW with linear algebra in boolean polynomial rings. To control the size of appearing matrices as small as possible, we deal with J-pairs with the {\em minimal degree} first. That is, at Step 6 of GVW, we find the minimal degree of all J-pairs in JP first, and then choose the J-pair with the smallest signature among J-pairs with the minimal degree.

However, we are quite surprised to find that when computing a Gr\"obner basis for HFE$\_$25$\_$96 from \citep{Steel04}, degrees of matrices always grow up to 5, this makes our implementation much less efficient, because the sizes of matrices with degree 5 are much larger than those with degree 4, and moreover, it has been shown in \citep{Fau03} that Gr\"obner basis of this example can definitely be obtained with degrees of matrices smaller than 5.
 Here the degree  of a matrix is  the maximal degree of the polynomials to construct this matrix.
%(the biggest one in matrix version of GVW has a size of $63342\times 68298$)

We find the above phenomenon does not depend on the computing orders of J-pairs.
In order to illustrate this phenomenon clearly, we finally get  the following example
after testing many examples.
\begin{example}
Let $\{f_1, f_2, \ldots, f_{11}\} \subset R = \gf_2[x_1, x_2, \ldots, x_9]$, where $\gf_2$ is the Galois Field $GF(2)$, and $$f_1 = x_1x_2x_5x_6+x_2x_3x_7x_9+x_7, f_2 = x_1x_2x_6x_8+x_3x_4x_7,$$ $$f_{i+2} = x_i^2+x_i, \mbox{ for } 1 \le i \le 9.$$ Monomial ordering $\prec_p$ in $R$ is the Graded Reverse Lexicographic ordering, and $\prec_s$ in $R^{11}$ is a position over term extension of $\prec_p$: $$x^\alpha\e_i \prec_s x^\beta\e_j \mbox{ iff } i> j,  \mbox{ or }  i = j \mbox{ and } x^\alpha \prec_p x^\beta.$$ Thus, $\e_1 \succ \e_2 \succ \cdots \succ \e_{11}$.
\end{example}

For this example, GVW needs to deal with J-pairs having degree bigger than 5, while the maximal degree of matrices in the F4 algorithm with criteria from \citep{Buchberger79} is only 5. This implies some ``useful'' J-pairs with degrees not bigger than 5 have been rejected by GVW's criteria.

Now,  we will discuss this  example in details.
We compute a Gr\"obner basis for $\langle f_1, \ldots, f_{11}\rangle$ by GVW with the following strategy for selecting J-pairs at Step 6:
\begin{enumerate}

\item $deg \lla$the minimal degree of J-pairs in JP.

\item $(x^\alpha\e_i, f) \lla$ J-pair with the smallest signature in the set $\{(x^\beta\e_j, g) \in$ JP $\mid \deg(g) = deg\}$.

\end{enumerate}
Since the  $f_i$'s are all inhomogeneous,   the above strategy   leads to that J-pairs  are not handled in an increasing order on signatures\footnote{Even dealing with J-pairs in an increasing order on signatures, GVW still has to reduce J-pairs with degrees bigger than 5 before a strong Gr\"obner basis is obtained.}.
%, and hence, the computation is not done in an incremental manner on polynomials $f_i$.

%For sake of simplification, we write monomials in a simpler form below, for example, $x_1x_2x_5x_6$ and  $x_2^2$ are written as $x_{1, 2, 5, 6}$ and $x_{2,2}$ for short.

Due to page limitation, we only present some results here. Initially, we have $G = \{(\e_1, f_1), (\e_2, f_2), \ldots, (\e_{11}, f_{11})\}$.
Before dealing with J-pairs with degree 6, the following polynomials are generated one by one: \smallskip \\
$(x_8\e_2, f_{12} = x_3x_4x_7x_8+x_3x_4x_7)$,\\
$(x_6\e_2, f_{13} = x_3x_4x_6x_7+x_1x_2x_6x_8),$ \\
$(x_2\e_2, f_{14} = x_2x_3x_4x_7+x_1x_2x_6x_8)$,\\
$(x_1\e_2, f_{15} = x_1x_3x_4x_7+x_1x_2x_6x_8)$,\\
$(x_8\e_1, f_{16} = x_2x_3x_7x_8x_9+x_3x_4x_5x_7+x_7x_8)$,\\
$(x_6\e_1, f_{17} = x_2x_3x_6x_7x_9+x_1x_2x_5x_6+x_6x_7)$,\\
$(x_5\e_1, f_{18} = x_2x_3x_5x_7x_9+x_1x_2x_5x_6+x_5x_7)$,\\
$(x_2\e_1, f_{19} = x_2x_7+x_7)$,\\
$(x_1\e_1, f_{20} = x_1x_2x_3x_7x_9+x_1x_2x_5x_6+x_1x_7)$,\\
$(x_2x_3x_8x_9\e_1, f_{21} = x_3x_4x_5x_7+x_3x_7x_8x_9+x_7x_8)$,\\
$(x_2x_3x_6x_9\e_1, f_{22} = x_3x_6x_7x_9+x_3x_7x_9+x_6x_7+x_7)$,\\
$(x_2x_3x_5x_9\e_1, f_{23} = x_3x_5x_7x_9+x_3x_7x_9+x_5x_7+x_7)$,\\
$(x_1x_2x_3x_9\e_1, f_{24} = x_1x_3x_7x_9+x_2x_3x_7x_9+x_1x_7+x_7)$.\smallskip

During computations, many leading monomial of syzygies in $\M$ are generated, among them the one $x_2x_3x_4\e_1$ (obtained before $f_{20}$) is important, since it has been used to reject many other J-pairs.

So far, all J-pairs with degrees not bigger than 5 have been considered. It is easy to check $\{f_1, f_2, \ldots, f_{24}\}$ is {\em not} a Gr\"obner basis of $\langle f_1, f_2, \ldots, f_{11}\rangle$. However, for the same ideal, F4 algorithm with criteria from \citep{Buchberger79} can obtain a Gr\"obner basis without computing any critical pairs with degrees bigger than 5. Comparing GVW and F4 algorithm step by step, finally, we find the following J-pairs:
%\begin{left}
\vspace*{-7pt}
\begin{table}[h]
$x_4(x_2x_3x_5x_9\e_1, f_{23})$, \spc\spc $x_4(x_1x_2x_3x_9\e_1, f_{24})$,

$x_4(x_2x_3x_6x_9\e_1, f_{22})$, \spc\spc $x_1x_5x_6(x_2\e_1, f_{19})$,

$x_1x_6x_8(x_2\e_1, f_{19} )$, \spc\spc $x_3(x_2x_3x_8x_9\e_1, f_{21})$,

 $x_4(x_2x_3x_8x_9\e_1, f_{21})$, \spc\spc$x_3(x_1x_2x_3x_9\e_1, f_{24})$,

$x_9(x_1x_2x_3x_9\e_1, f_{24})$, \spc\spc$x_3(x_2x_3x_5x_9\e_1, f_{23})$,

$x_9(x_2x_3x_5x_9\e_1, f_{23})$, \spc\spc$x_3(x_2x_3x_6x_9\e_1, f_{22})$,

$x_9(x_2x_3x_6x_9\e_1, f_{22})$,
\end{table}
\vspace*{-7pt}

\noindent which are rejected by GVW's criteria but not rejected by Buchberger's criteria in F4.

Reducing these J-pairs,
%we get the following pairs: \\
%$(x_2x_3x_4x_5x_9\e_1, %x_3x_4x_7x_9+x_3x_7x_8x_9+x_4x_5x_7+x_7x_8x_9+x_4x_7)$,\\
%$(x_1x_2x_3x_4x_9\e_1, x_1x_4x_7+x_4x_7)$,\\
%$(x_2x_3x_4x_6x_9\e_1, x_4x_6x_7+x_4x_7)$,\\
%$(x_1x_2x_5x_6\e_1, x_1x_5x_6x_7+x_2x_3x_7x_9+x_7^2)$,\\
%$(x_1x_2x_6x_8\e_1, x_1x_6x_7x_8+x_3x_4x_7)$,\\
%$(x_2x_3^2x_8x_9\e_1, x_3x_7x_8+x_7x_8)$,\\
%$(x_2x_3x_4x_8x_9\e_1, x_3x_4x_7x_9+x_3x_7x_8x_9+x_4x_7x_8+x_7x_8)$,\\
%$(x_1x_2x_3^2x_9\e_1, x_1x_3x_7+x_1x_7+x_3x_7+x_7)$,\\
%$(x_1x_2x_3x_9^2\e_1, x_1x_7x_9+x_1x_7+x_7x_9+x_7)$,\\
%$(x_2x_3^2x_5x_9\e_1, x_3x_5x_7+x_3x_7+x_5x_7+x_7)$,\\
%$(x_2x_3x_5x_9^2\e_1, x_5x_7x_9+x_5x_7+x_7x_9+x_7)$,\\
%$(x_2x_3^2x_6x_9\e_1, x_3x_6x_7+x_3x_7+x_6x_7+x_7)$,\\
%$(x_2x_3x_6x_9^2\e_1, x_6x_7x_9+x_6x_7+x_7x_9+x_7)$.\\
we get 9 polynomials with degree 3 and 4 polynomials with 4. These polynomials are computed in F4, and prevent F4 to deal with critical pairs with degree bigger than 5.

Next, we analyze why GVW is possible to reject J-pairs with lower degrees and prefer to reducing higher degree J-pairs.
%, and hence, show our main ideas for a variant of GVW.

Take J-pair $x_3(x_1x_2x_3x_9\e_1, f_{24})$ for example. Reducing this J-pair, we get $$(x_1x_2x_3^2x_9\e_1, x_1x_3x_7+x_1x_7+x_3x_7+x_7).$$ But this J-pair is rejected by $((x_3^2 + x_3)\e_1 - f_1\e_5, 0) \in \M$ in GVW, which is the principal syzygy  of $f_1$ and $f_5$. While running GVW forward, we find the polynomial $x_1x_3x_7+x_1x_7+x_3x_7+x_7$ is obtained from the J-pair $x_3(x_1\e_1, f_{20})$, which is a J-pair of degree 6. Combined with our experiences of proving F5 in \citep{SunWang13}, we have the following observation.

\begin{remark}
GVW's criteria alway reject J-pairs with higher signatures, and proceed some J-pairs with smaller signatures instead.

When input systems are inhomogeneous, J-pairs with bigger signatures may have lower {\bf degrees} than J-pairs with smaller signatures.
\end{remark}

Consider the degree-5 J-pair $x_3(x_1x_2x_3x_9\e_1, f_{24})$ again. The syzygy $((x_3^2 + x_3)\e_1 - f_1\e_5, 0) \in \M$, which rejects this J-pair, corresponds to the
equation $$f_5 f_1 - f_1 f_5 = 0,$$ in which monomials with degree 6 appear.  Thus, we believe that  if we use the syzygy $((x_3^2 + x_3)\e_1 - f_1\e_5, 0)$ to reject this
degree-5 J-pair, it is possible to deal with some J-pairs involving polynomials of degree 6 instead later. On seeing this, our basic idea is to prevent GVW rejecting J-pairs
like $x_3(x_1x_2x_3x_9\e_1, f_{24})$ by using syzygy like $((x_3^2 + x_3)\e_1 - f_1\e_5, 0)$.

Analyzing all J-pairs we have listed earlier, we find they have two common properties:
\begin{enumerate}

\item All these J-pairs are rejected by Syzygy Criterion.
%\footnote{Since ``cover check'' is used in Rewriting Criterion, J-pairs with smaller ``ratio'' (defined in \citep{Roune12}) will be reduced instead of J-pairs with bigger ``ratio'', and hence, J-pairs rejected by Rewriting Criterion should not have higher degrees than their  replacements. Our experiments show this observation is correct, and we will prove this observation sooner.}

\item For any J-pair $t(x^\alpha\e_1, f_{j})$ that is listed,  we find that $$\deg(x^\alpha) + \deg(f_1) > \deg(f_j).$$
\end{enumerate}
The second property   makes the degree of $x_3(x_1x_2x_3x_9\e_1, f_{24})$   lower than the degree of $f_5f_1$. In order to prevent this J-pair to be rejected,
there are two possible method: (1) treat the syzygy $(f_5\e_1 - f_1\e_5, 0)$ specially; and (2) treat the pair $(x_1x_2x_3x_9\e_1, f_{24})$ specially.

For the first method, we can store the degree of $f_5f_1$ together with the syzygy $(f_5\e_1 - f_1\e_5, 0)$ and prevent it to reject J-pairs that have lower degrees than $\deg(f_5f_1)$. But we do not use the first method in our implementation, because the second method seems to be simpler.

%\begin{remark}
%Signature-based criteria are more powerful than Buchberger's criteria for detecting redundant computations. The correctness of %GVW's criteria is also no doubt. In this subsection, we only want to say that when input systems are inhomogeneous, GVW's criteria %may reject J-pairs with lower degrees and prefer to reducing J-pairs with higher degrees, which will slow down implementations of %GVW when using linear algebra, because implementations with linear algebra prefer to dealing with J-pairs/critical pairs with lower %degrees first, in order to make sizes of matrices smaller.
%\end{remark}

We find the pairs satisfying  the property 2 are similar to mutant polynomials defined in \citep{Ding08}, so we give the following definition.
\begin{define}
Let $\M$ be an $R$-module generated by $\{(\e_1, f_1)$, $\ldots, (\e_m, f_m)\}$. A pair $(\u, f)\in \M$ with $\lm(\u) = x^\alpha\e_i$ and $f\not = 0$, is called {\bf mutant}, if $\deg(x^\alpha) + \deg(f_i) > \deg(f)$.
\end{define}
Due to the existence of syzygy pairs, there are lots of mutant pairs in $\M$. But mutant pairs appearing in GVW are not so many, since the Syzygy Criterion is used.

%Clearly, there is no mutant pairs in $\M$, if $f_1, \ldots, f_m$ are homogeneous. (? may be not true due to syzygy)
\subsection{The M-GVW algorithm} \label{subsec_vgvw}

 The basic idea of M-GVW is to   append mutant pairs  to the initial input system and assign new signatures to such pairs so that  the J-pairs generated by mutant pairs will not be all rejected by GVW's Syzygy Criterion, and hence, the maximal degree of polynomials appearing in the computations will not become too high.

Specifically, let $\M$ be generated by $\{(\e_1, f_1)$, $\ldots, (\e_m, f_m)\}$, and $(\u, f)$ with $\lm(\u) = x^\alpha\e_k$ be the first mutant pair that we meet during computations where $\e_i \in R^m$. Then we add a pair $(\e_{m+1}, f)$ as a new generator and the module is expanded to the module generated by $\{(\e_1, f_1)$, $\ldots, (\e_m, f_m)$, $(\e_{m+1}, f_{m+1} = f)\}$.  Please note that dimensions of $\e_1, \ldots, \e_m$ are enlarged to $m+1$ by appending $0$'s to last entry, and now $\e_i \in R^{m+1}$.  We emphasize that, after appending $(\e_{m+1}, f_{m+1})$, we always require
\begin{equation}\label{equ_require}
x^\alpha\e_k \succ_s \e_{m+1}.
\end{equation}
That is, signature of a new appended generator $(\e_{m+1}, f_{m+1})$ should be smaller than the signature of the mutant pair $(\u, f)$, such that $(\e_{m+1}, f_{m+1})$ will not be reduced to 0. Next, when the second mutant pair is obtained, we append it as the $(m+2)$th generator, and so on.  This appending method was mentioned in \citep{SunWang09} by authors.

In order to ensure termination of this variant algorithm, when we meet a mutant polynomial $(\u, f)$, we usually do not append $f$ as the $k$-th generator directly. Instead, we compute the remainder of $f$ w.r.t. the previous  $k-1$ existing generators $\{f_1, \ldots, f_{k-1}\}$ by polynomial division first (defined in \citep{Cox05}) without consideration of signatures, and denote the remainder as $f'$. If $f'\not=0$, then $\lm(f')$ is not divisible by any $\lm(f_i)$ where $1 \le i < k$, and then we add $(\e_k, f')$ as the $k$-th generator.
Please note that $f'$ is in the ideal generated by $\{f_1,\ldots, f_{k-1} \}$.

Next we give M-GVW below. Function $Rem(f, F)$ computes a remainder of $f$ w.r.t. $F$ by polynomial division.

\begin{algorithm}[!ht]
\DontPrintSemicolon
\SetAlgoSkip{}
%\SetAlgoNoLine
\LinesNumbered

\SetKwInOut{Input}{Input}
\SetKwInOut{Output}{Output}
\SetKwFor{For}{for}{do}{end\ for}
\SetKwIF{If}{ElseIf}{Else}{if}{then}{else\ if}{else}{end\ if}

\Input{$f_1, \ldots, f_m \in R = K[x_1, \ldots, x_n]$, monomial orderings for $R$ and $R^m$.}

\Output{A Gr\"obner basis of $I = \langle f_1,\ldots, f_m\rangle$.}

\BlankLine

\Begin{

$H \lla \{\lm(f_j\e_i - f_i\e_j) \mid 1 \le i, j \le m\}$

$G \lla \{(\lm(\e_i), f_i) \mid 1\le i \le m\}$

$\ind \lla m$

$\jpset \lla \{$all J-pairs of $G\}$

\While{$\jpset \not= \emptyset$}
{
Let $t(x^\alpha\e_i, f) \in \jpset$

$\jpset \lla \jpset \setminus \{t(x^\alpha\e_i, f)\}$

\If{$tx^\alpha\e_i$ is divisible by monomials in $H$}{\goto \step 5 (Syzygy Criterion)}

\If{$t(x^\alpha\e_i, f)$ is covered by $G$}{\goto \step 5 (Rewriting Criterion)}

$(tx^\alpha\e_i, h) \lla${\em Regular} top-reduce $t(x^\alpha\e_i, f)$ by $G$.

\If{$h = 0$}
{$H \lla H \cup \{tx^\alpha\e_i\}$

\goto \step 5
}

\If{$\deg(tx^\alpha) + \deg(f_i) > \deg(h)$ and $\rem(h, \{f_1, \ldots, f_{\ind}\}) \not= 0$}
{

$f_{\ind + 1} \lla \rem(h, \{f_1, \ldots, f_{\ind}\})$

$\ind \lla \ind + 1$

Denote $(\e_{\ind}, f_{\ind})$ as $(x^\gamma\e_k,  p )$
}
\Else{
Denote $(tx^\alpha\e_i, h)$ as $(x^\gamma\e_k, p)$
}
\For{$(x^\beta\e_j, g) \in G$, $\lm(g)x^\gamma\e_k\not= \lm(p)x^\beta\e_j$}
{$H \lla H \cup \{\max(\lm(g)x^\gamma\e_k, \lm(p)x^\beta\e_j) \}$

$\jpset \lla \jpset \cup \{$J-pair of $(x^\gamma\e_k, p)$ and $(x^\beta\e_j, g)\}$
}
$G \lla G \cup \{(x^\gamma\e_k, p)\}$
}

{\bf return} $\{g \mid (x^\beta\e_j, g)\in G\}$
}
\caption{The M-GVW algorithm}
\end{algorithm}

\begin{theorem}\label{thm_mgvwterminate}
The M-GVW algorithm terminates in finite steps, if monomial orderings in $R^m$ and $R$ are compatible.
\end{theorem}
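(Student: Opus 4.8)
The plan is to reduce termination of M-GVW to the already-established termination of ordinary GVW by showing that only finitely many new generators (mutant pairs) are ever appended. The argument splits naturally into two parts: first a bound on the number of generator-additions, and second the observation that once additions stop, M-GVW is literally a GVW run on a fixed module whose termination is known.

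For the first part I would concentrate on the remainder step. Whenever a mutant pair triggers an addition, the new generator is not the reduced pair $h$ itself but $f_{\ind+1} = \rem(h, \{f_1, \ldots, f_{\ind}\})$, and it is appended only when this remainder is nonzero. By the definition of multivariate polynomial division, no term of a remainder is divisible by any $\lm(f_i)$ with $1 \le i \le \ind$; in particular $\lm(f_{\ind+1})$ lies outside the monomial ideal $\langle \lm(f_1), \ldots, \lm(f_{\ind}) \rangle$. Hence each addition strictly enlarges the leading-term ideal of the current generating set, producing a strictly ascending chain of monomial ideals in $R$. Since $R = \k[x_1, \ldots, x_n]$ is Noetherian (equivalently, by Dickson's lemma), such a chain stabilizes, so only finitely many generators can be appended. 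Let $m'$ denote the final number of generators and $\{f_1, \ldots, f_{m'}\}$ the final system; because the signatures of earlier generators are merely re-embedded into $R^{m'}$ by appending zeros, the accumulated data $G$, $\H$ and $\jpset$ at that moment are all legitimate objects over the module $\M$ generated by $(\e_1, f_1), \ldots, (\e_{m'}, f_{m'})$.

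For the second part I would argue that after the last generator is appended, every subsequently processed J-pair must take the final (\textbf{else}) branch of the algorithm: if one of them satisfied the mutant condition with a nonzero remainder it would trigger yet another addition, contradicting maximality. Thus from this point on the updates of $G$, $\H$ and $\jpset$ coincide exactly with GVW applied to the fixed system $\{f_1, \ldots, f_{m'}\}$, under the same Syzygy and Rewriting criteria and with J-pairs drawn in some (unrestricted) order. Invoking termination of GVW for an arbitrary selection order of J-pairs when the orderings on $R^{m'}$ and $R$ are compatible (Theorem 3.1 of \citep{Gao10}, in the ``any order'' form of Theorem 3.5 of \citep{SunWang12}), this final phase halts, and therefore so does M-GVW.

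The main obstacle I anticipate is not the Dickson argument, which is routine, but the careful bookkeeping needed to justify that the state carried across each generator-addition is a bona fide GVW state on the enlarged module. In particular one must check that the signature constraint $x^\alpha\e_k \succ_s \e_{m+1}$ of (\ref{equ_require}) keeps the compatibility hypothesis intact for the GVW termination theorem, and that no J-pair created before an addition becomes ill-defined once the ambient module grows from $R^{\ind}$ to $R^{\ind+1}$. Once this consistency is verified, the two parts combine at once to give finite termination.
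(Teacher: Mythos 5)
Your proposal is correct and follows essentially the same route as the paper: the nonzero remainder forces a strict enlargement of the leading-term ideal at each generator-addition, the ascending chain condition bounds the number of additions, and afterwards M-GVW coincides with GVW, whose termination (for any J-pair selection order, under compatible orderings) is already known. The only difference is that you flag the state-consistency bookkeeping across module enlargements explicitly, which the paper leaves implicit.
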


\begin{proof}
At step 17-20 of M-GVW, a new generator is appended when $(tx^\alpha\e_i, h)$ is mutant and the remainder of $h$, say $h'$, w.r.t. $\{f_1, \ldots, f_{index}\}$ is not $0$. If $h'\not=0$, the ideal $\langle \lm(f_1), \ldots, \lm(f_{index}) \rangle$ is strictly smaller than $\langle \lm(f_1)$, $\ldots, \lm(f_{index}), \lm(h') \rangle$. Ascending chain  condition of ideals \citep{Cox05} implies M-GVW can only append finite many new generators. That is, after appending some generator $(\e_l, f_l)$, no more generators will be appended. In this case, M-GVW turns to be GVW, and the termination is ensured by Theorem 3.1 of \citep{Gao10}.
\end{proof}

\begin{theorem}\label{thm_mgvwcorrect}
The M-GVW algorithm is correct.
\end{theorem}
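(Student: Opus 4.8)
The plan is to deduce correctness of M-GVW from the correctness of GVW applied to the \emph{enlarged} system that is present at termination. By Theorem \ref{thm_mgvwterminate} only finitely many generators are appended, so the index stabilizes at some $l \ge m$ and the final module is $\M_l$, generated by $(\e_1, f_1), \ldots, (\e_l, f_l)$ with $\e_i \in R^l$. The first step is to record that the ideal is unchanged: each appended generator $f_k$ (for $m < k \le l$) equals $\rem(h, \{f_1, \ldots, f_{k-1}\})$, where $h$ is the polynomial part of a regularly top-reduced pair; since $h$ already lies in $I = \langle f_1, \ldots, f_m\rangle$, so does its remainder $f_k$. Hence $\langle f_1, \ldots, f_l\rangle = I$, and once the returned $G$ is shown to be a strong Gr\"obner basis for $\M_l$, Proposition 2.2 of \citep{Gao10} yields at once that $\{g \mid (x^\beta\e_j, g) \in G\}$ is a Gr\"obner basis of $I$.

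The core step is therefore to prove that the $G$ returned by M-GVW is a strong Gr\"obner basis for $\M_l$. I would do this by presenting the whole run as a legitimate execution of GVW on the fixed input $(f_1, \ldots, f_l)$ in which the high-index generators $f_{m+1}, \ldots, f_l$ are merely \emph{discovered lazily}: when $f_k$ is appended, the J-pairs of $(\e_k, f_k)$ with every current member of $G$ are inserted into $\jpset$, and a later appended generator contributes its J-pairs with $(\e_k, f_k)$ in turn, so that every pairwise J-pair among the final generators eventually enters $\jpset$ and is processed. Because GVW is correct for \emph{any} processing order of J-pairs (the remarks after the GVW algorithm, together with Proposition 2.2 of \citep{Gao10}), and because the Syzygy and Rewriting criteria only invoke syzygy signatures in $H$ and covering relations, both of which remain valid after the module is enlarged, it suffices to verify that no J-pair needed for GVW's completeness argument is permanently lost through the modification at Steps 17--20.

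This modification is the main obstacle. Instead of inserting the regularly top-reduced mutant pair $(tx^\alpha\e_i, h)$ into $G$, M-GVW inserts the new generator $(\e_k, h')$ with $h' = \rem(h, \{f_1, \ldots, f_{k-1}\})$ and the strictly smaller signature $\e_k \prec_s tx^\alpha\e_i$ forced by (\ref{equ_require}), where $i<k$. The device I would use to control this is the syzygy it creates: writing $h = \sum_{j<k} q_j f_j + h'$ from the remainder computation and recalling $h' = f_k$, the vector $\u$ of the mutant pair (with $\lm(\u) = tx^\alpha\e_i$) satisfies
\begin{equation*}
\bigl(\u - \textstyle\sum_{j<k} q_j \e_j - \e_k\bigr)\cdot \F = h - h = 0,
\end{equation*}
so $\u - \sum_{j<k} q_j\e_j - \e_k$ is a syzygy of $\M_l$. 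I would analyze the leading signature of this syzygy together with the new generator $(\e_k, h')$ to show that top-reducibility is restored exactly at the signature $tx^\alpha\e_i$ at which the mutant pair was removed: since $\lm(h')\preceq \lm(h)$ and $\lm(h')$ is, by construction, divisible by no $\lm(f_j)$ with $j<k$, the pair $(\e_k, h')$ genuinely enlarges $\langle \lm(G)\rangle$ and, thanks to its smaller signature, cannot reduce to zero; meanwhile the above syzygy accounts, through the Syzygy Criterion, for any pair of signature $tx^\alpha\e_i$ that $(\e_k, h')$ does not itself top-reduce. Establishing this restoration uniformly over all appended generators is precisely the delicate point, because one must rule out that assigning the \emph{small} signature $\e_k$ opens a gap at $tx^\alpha\e_i$ in the minimal-signature counterexample argument underlying GVW's completeness.

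I expect the signature bookkeeping in the previous paragraph to be the hard part; the remaining verifications---that the criteria stay valid under enlargement, that every pairwise J-pair is eventually processed, and that the ideal is preserved---are routine adaptations of the corresponding arguments in \citep{Gao10}. Once the coverage of every J-pair of the final $G$ (by $G$ itself or by a syzygy in $H$) is confirmed, GVW's completeness theorem applies to $\M_l$, so $G$ is a strong Gr\"obner basis for $\M_l$ and the returned polynomials form a Gr\"obner basis of $I$.
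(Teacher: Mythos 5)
Your top-level strategy coincides with the paper's own proof: observe that each appended $f_k$ is a remainder of an element of $I=\langle f_1,\ldots,f_m\rangle$ and hence lies in $I$, so the ideal is unchanged; then use Theorem \ref{thm_mgvwterminate} to get a final, stable generating system $(f_1,\ldots,f_l)$ and appeal to the correctness of GVW (Theorem 2.2 of \citep{Gao10}) for $\langle f_1,\ldots,f_l\rangle=\langle f_1,\ldots,f_m\rangle$. The paper's argument is literally just that: after the last generator is appended, ``M-GVW turns to be GVW.'' Where you go beyond the paper is in refusing to take this reduction on faith. You point out that when a mutant pair $(tx^\alpha\e_i,h)$ is met, M-GVW discards it and instead inserts $(\e_k,h')$ with a strictly smaller signature, so the J-pair that produced $(tx^\alpha\e_i,h)$ is neither added to $G$ nor obviously covered by the final $G$ (indeed $\e_k\nmid tx^\alpha\e_i$ since $k\neq i$, so the new generator cannot cover it in the technical sense defined in the paper), and you propose to repair this via the syzygy $\u-\sum_{j<k}q_j\e_j-\e_k$ coming from the division $h=\sum_{j<k}q_jf_j+h'$. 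This is a genuine issue that the paper's two-sentence proof silently skips. However, you also stop short of resolving it: you explicitly defer the ``signature bookkeeping'' needed to show that top-reducibility is restored at signature $tx^\alpha\e_i$. So, measured against the paper, your proposal is the same proof with an honest acknowledgement of the step the paper elides; measured as a standalone proof, it still has a gap exactly at that step, and closing it would require actually carrying out the covering/syzygy analysis sketched in your third paragraph (or re-deriving Gao et al.'s characterization of strong Gr\"obner bases for the enlarged module directly).
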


\begin{proof}
Clearly, we have $f_{index+1} \in \langle f_1, \ldots, f_{index}\rangle$ for $index \ge m$. Assume $(\e_l, f_l)$ is the last generator appended in M-GVW. In this case, M-GVW turns to be GVW, and
M-GVW computs a Gr\"obner basis for $\langle f_1, \ldots, f_l\rangle = \langle f_1, \ldots, f_m\rangle$ by Theorem 2.2 of \citep{Gao10}.
\end{proof}

There are some remarks on M-GVW.

\begin{enumerate}

\item Since we always make requirement like (\ref{equ_require}), we prefer $\prec_s$ to be a {\em position over term extension} of $\prec_p$ in M-GVW with $\e_1 \succ \e_2 \succ \cdots \succ \e_m \succ \cdots$.

\item In practical implementation, we usually do not append all mutant pairs as new generators. Because appending generators with high degrees often make the implementation less efficient, and too many generators will also weaken the power of Syzygy Criterion.
So we usually add a constraint ``$\deg(h) <$ {\rm  Deg-Limit} '' at Step 17, where {\rm  Deg-Limit}   is a given constant.

%\item When implementing with linear algebra, many mutant pairs may be generated at a same time, the ordering of appending these mutant pairs is also very subtle.

%\item If mutant pairs are found and appended, M-GVW cannot compute a Gr\"obner basis for the syzygy module $\{(p_1, \ldots, p_m) \mid p_1f_1 + \cdots + p_mf_m = 0\}$.

\item Mutant pairs cannot be found in M-GVW when input systems are {\em homogeneous}. In this case, M-GVW is just the GVW algorithm.

\end{enumerate}

\section{An implementation with linear algebra over boolean polynomial rings} \label{sec_implementation}

In this section, we give an implementation of M-GVW based on the dense matrix library M4RI, and show some details in our implementation.

The polynomial ring is specialized as $R = \gf_2[x_1, x_2, \ldots, x_n]$ with $n$ variables over the Galois Field $GF(2)$. Polynomials $E = \{x_1^2+x_1, \ldots, x_n^2+x_n\}$ are called field polynomials. Let $F = \{f_1, \ldots, f_m\}$ be a subset of $R$. Then computing a Gr\"obner basis for the ideal generated by $F$ over the boolean polynomial ring $R/\langle E \rangle$, is equivalent to computing a Gr\"obner basis for the ideal generated by $F \cup E$ over $R$. In our implementation, we aim to compute Gr\"obner bases for $F\cup E$ over $R$, so all the operations are done in $R$. In fact, since field polynomials have quite special forms, we do not need to store them in practical implementations, moreover, normal forms of polynomials in $R$ w.r.t. $E$ are also done automatically.

We specialize the monomial ordering $\prec_p$ on $R$ to be the {\em Graded Reverse Lexicographic ordering}. And monomial ordering $\prec_s$ on modules is a {\em position over term extension} of $\prec_p$, such that $\e_1 \succ_s \e_2 \succ_s \cdots$. Note that $\e_j$'s corresponding to field polynomials are always smaller than other non-field polynomials, even if new generators are appended, such that field polynomials can always be used for reductions.

In Subsection \ref{subsec_matrixstyle}, we write M-GVW in a matrix style. In Subsection \ref{subsec_elimination}, we show how to do reductions efficiently based on matrices.  %We discuss how to split matrix into smaller matrices in Subsection \ref{subsec_split}.

\subsection{M-GVW in matrix style} \label{subsec_matrixstyle}

The matrix version of M-GVW is quite similar to the F4 algorithm. The main function is given below.

\begin{algorithm}[!ht]
\DontPrintSemicolon
\SetAlgoSkip{}
%\SetAlgoNoLine
\LinesNumbered

\SetKwInOut{Input}{Input}
\SetKwInOut{Output}{Output}
\SetKwFor{For}{for}{do}{end\ for}
\SetKwIF{If}{ElseIf}{Else}{if}{then}{else\ if}{else}{end\ if}

\Input{$f_1, \ldots, f_m \in R = K[x_1, \ldots, x_n]$, monomial orderings for $R$ and $R^m$.}

\Output{A Gr\"obner basis of $I = \langle f_1,\ldots, f_m\rangle$.}

\BlankLine

\Begin{

$H \lla \{\lm(f_j\e_i - f_i\e_j) \mid 1 \le i, j \le m\}$

$G \lla \{(\lm(\e_i), f_i) \mid 1\le i \le m\}$

$\ind \lla m$

$\jpset \lla \{$all J-pairs of $G\}$

\While{$\jpset \not= \emptyset$}
{
$\d \lla$ the minimal degree of J-pairs in $\jpset$

$\jpd \lla$ all J-pairs with degree $\d$ in $\jpset$

$\jpset \lla \jpset \setminus \jpd$

$\jpd' \lla$ discard J-pairs that are rejected by Syzygy and Rewritting Criteiron from $\jpset$

$P\lla \sym(\jpd', G)$ {\em (element in $P$ has a form of $(x^\alpha\e_i, f))$}

$F \lla \eli(P)$ {\em (element in $F$ has a form of $(x^\alpha\e_i, h))$}

$F^+ \lla F \setminus \{$pairs are super top-reducible by $G\}$

\For{each $(x^\alpha\e_i, h)\in F^+$ s.t. $h = 0$}
{$H \lla H \cup \{x^\alpha\e_i\}$
}

\For{each $(x^\alpha\e_i, h)\in F^+$ s.t. $h \not= 0$}
{
\If{$\deg(tx^\alpha) + \deg(f_i) > \deg(h)$ and $\deg(h) < $ {\rm  Deg-Limit} and $\rem(h, \{f_1, \ldots, f_{\ind}\}) \not= 0$}
{

$f_{\ind + 1} \lla \rem(h, \{f_1, \ldots, f_{\ind}\})$

$\ind \lla \ind + 1$

Denote $(\e_{\ind}, f_{\ind})$ as $(x^\gamma\e_k,  p )$
}
\Else{
Denote $(tx^\alpha\e_i, h)$ as $(x^\gamma\e_k, p)$
}
\For{$(x^\beta\e_j, g) \in G$, $\lm(g)x^\gamma\e_k\not= \lm(p)x^\beta\e_j$}
{$H \lla H \cup \{\max(\lm(g)x^\gamma\e_k, \lm(h)x^\beta\e_j) \}$

$\jpset \lla \jpset \cup \{$J-pair of $(x^\gamma\e_k, p)$, $(x^\beta\e_j, g)\}$
}
$G \lla G \cup \{(x^\gamma\e_k, p)\}$
}
}
{\bf return} $\{g \mid (x^\beta\e_j, g)\in G\}$
}
\caption{M-GVW in matrix style}
\end{algorithm}

Function $SymbolicProcess(JPdeg', G)$ will do three things. First, compute J-pairs from $JPdeg'$. Second, for each monomial that is not a leading monomial, search polynomials from $G$ to reduce it. Third, sort all pairs according their signatures, and if there are several pairs having the same signature, retain only one of them.  Denote $M(P)$ be the set of all monomials in $h$ for any $(x^\gamma\e_k, h)\in P$.

\begin{algorithm}[!ht]
\DontPrintSemicolon
\SetAlgoSkip{}
%\SetAlgoNoLine
\LinesNumbered

\SetKwInOut{Input}{Input}
\SetKwInOut{Output}{Output}
\SetKwFor{For}{for}{do}{end\ for}
\SetKwIF{If}{ElseIf}{Else}{if}{then}{else\ if}{else}{end\ if}

\Input{$\jpset$, a set of J-pairs, $G$, a set of pairs.}

\Output{$P$, element in $P$ has a form of $(x^\alpha\e_i, f)$.}

\BlankLine

\Begin{

$P \lla \emptyset$

\For{each $t(x^\alpha\e_i, f)$ in $\jpset$}
{$P \lla P \cup \{(tx^\alpha\e_i, tf)\}$
}

$\done \lla \{\lm(h) \mid (x^\gamma\e_k, h) \in P\}$

\While{$M(P) \not= \done$}
{
$m \lla$ an element of $M(P) \setminus \done$

$\done \lla \done \cup \{m\}$

\If{$\exists (x^\beta\e_j, g) \in G$ s.t. (1) $\lm(g) \mid m$, and (2) $t(x^\beta\e_j, g)$ are not rejected by Syzygy and Rewritting Criterion, where $t = m/\lm(g)$}
{
$P \lla P \cup \{t(x^\beta\e_j, g)\}$
}
}

Sort $P$ by an increasing order on signatures, and if there are several pairs having the same signature, retain only one of them.

{\bf return} $P$

}
\caption{SymbolicProcess}
\end{algorithm}

This function is a bit different from Albrecht-Perry's version \citep{Albrecht10}. First, any monomial $m$ can be selected from $M(P)\setminus Done$, while in \citep{Albrecht10} the maximal one is selected each time.
%Choosing arbitrary monomials will speed up the procedure of build matrix and save time for searching the maximal one.
Second, for any selected monomial $m$, we do not need to know any signature information about it.

Function $Elimination(P)$ will also do three things. First, write pairs in $P$ as rows of a matrix. Second, compute the echelon form of this matrix. Third, read polynomials from rows of this matrix. In the first step, building matrices from boolean polynomials is different from building matrices from general polynomials, because the product of a monomial and a boolean polynomial should be reduced by field polynomials automatically. We report our method in \citep{Sun13} and omit details here. The second step is critical. We do not use naive Gaussian eliminations, and want to use efficient divide-and-conquer eliminating methods from M4RI to improve efficiencies. However,  in signature-based algorithms, since rows with higher signatures can only be eliminated by rows with lower signatures, functions from M4RI can not be used directly. So we use a special kind of row swaps to replace original row swaps in M4RI, which will be discussed in the next subsection.

For doing criteria check, similarly as discussed in \citep{Albrecht10}, we maintain two arrays of ``rules'' for Syzygy and Rewriting Criteria respectively. In ``rules'' of Rewritting Criterion, we sort pairs according to ``ratios'' of pairs, which is first introduced in \citep{Roune12}.
%Since we only update the rules after reading all new pairs from a matrix, the sort of ``rules'' on ``ratios'' can be done at once for all these new pairs. So maintaining ``rules'' will not cost too much time practically.

\subsection{Elimination} \label{subsec_elimination}

Unlike eliminations in F4, eliminations in signature-based algorithm can only be done from one side. That is, each row of matrix has a signature, and rows with higher signatures can only be reduced by rows with lower signatures. Naive Gaussian eliminations can control eliminating directions easily.  But to use efficient divide-and-conquer strategy as well as efficient implementation of matrices multiplications in the library M4RI \citep{Albrecht13}, we allow eliminations to swap rows in a special manner. This strategy is a bit similar to the ideas in \citep{Dumas13}.

Let $A$ be a matrix  with entries in $\gf_2$. Assume $A$ has the following form:
\begin{center}
$\begin{array}{cc}
\begin{array}{c}
S_1 \\
S_2 \\
S_3 \\
S_4 \\
S_5 \\
S_6 \\
\end{array}
&
\left(
\begin{array}{cccccc}
0 & 1 & * & * & * & *\\
0 & 0 & 0 & 1 & * & *\\
0 & 0 & 1 & * & * & *\\
0 & 1 & * & * & * & *\\
\color{blue}{1} & * & * & * & * & *\\
1 & * & * & * & * & *\\
\end{array}
\right),
\end{array}$
\end{center}
where ``$*$'' may be 1 or 0, $S_i$ is the signature of each row, and we assume $S_1 \prec_s S_2 \prec_s \cdots \prec_s S_6$.

To reduce $A$ to row-echelon form, we first find the pivot entry in the first column. {\em We must search  the pivot entry from top to bottom  (i.e. from lower signatures to higher signatures)}. Then we find the entry at row 5 and col 1 is a pivot. If we use general methods of elimination, we need to swap row 1 and row 5 directly, and clear entries at column 1 by the row with signature $S_5$. Next, when doing elimination in the second column, the row with signature $S_4$ is selected as pivot row, and needs to eliminate other rows. However, this will leads to errors in signature-based algorithms, because the row with signature $S_1$ has a smaller signature than $S_4$ and cannot be eliminated by the row with signature $S_4$.  So we cannot swap row 1 and row 5 directly.

So to make further eliminations correct, we swap row 1 and row 5 in a special manner. First, we pick up the row 5 with signature $S_5$. Second, we move rows 4, 3, 2, and 1 to rows 5, 4, 3, and 2 respectively. At last, we put the row with signature $S_5$ at row 1. After this swap, matrix A becomes the following form.
\begin{center}
$\begin{array}{cc}
\begin{array}{c}
S_5 \\
S_1 \\
S_2 \\
S_3 \\
S_4 \\
S_6 \\
\end{array}
&
\left(
\begin{array}{cccccc}
\color{blue}{1} & * & * & * & * & *\\
0 & 1 & * & * & * & *\\
0 & 0 & 0 & 1 & * & *\\
0 & 0 & 1 & * & * & *\\
0 & 1 & * & * & * & *\\
1 & * & * & * & * & *\\
\end{array}
\right).
\end{array}$
\end{center}

Next, we use the row with $S_5$ to clear all entries at column 1 below this row, and then column 1 is done.  For column 2, we find pivots from rows with $S_1, ..., S_4$ and $S_6$, and repeat the above processes. Elimination terminates when the matrix becomes an upper triangular form.

This swap makes eliminations correct in signature-based algorithm for the following reasons.
On one hand, since pivot rows (e.g. row of $S_5$) are finding from low signatures to high signatures, rows with smaller signature (e.g. rows of $S_1, \ldots, S_4$) cannot be reduced by pivot rows (e.g. row of $S_5$). On the other hand, after swaps, rows below pivot rows (e.g. rows of $S_1, \ldots, S_4$ and $S_6$) are still in an increasing order on signatures.

Using this special swap, the echelon form of $A$ is in an upper triangular form, such that divide-and-conquer methods of PLE decomposition \citep{Albrecht11} can be used, and hence, the eliminations can be speeded up significantly.

In our implementation, we modify many subroutines of $mzd\_ple()$ in M4RI library to use this swap. The new function with the special swap is called $gvw\_ple()$. We compare the efficiency of $mzd\_ple()$ and $gvw\_ple()$ in the next section. The results show both functions almost have the same efficiency.

\section{Timings} \label{sec_timings}

M-GVW over boolean polynomial rings has been implemented in C++ and the library M4RI (version 20130416) is used \citep{Albrecht13}. The codes will be available at  \url{http://www.mmrc.iss.ac.cn/~dwang/software.html} sooner.

In Table 1, we test the efficiency of the function $gvw\_ple()$, which is modified from $mzd\_ple()$ by using the new swap method. Examples with density $\approx 50\%$ are generated randomly by routines from M4RI. Since the densities of matrices in Gr\"obner basis computations are usually very small, we also generate some randomized matrices with density $\approx 3\%$. The first column in Table 1 is the size of matrices, and the timings in this table are given by seconds.

{
\begin{table}[H]\centering

\medskip
\begin{tabular}{c | c c | c c }
\hline
Tests & \multicolumn{2}{c}{density $\approx 50\%$} &  \multicolumn{2}{c}{density $\approx 3\%$}\\

& mzd\_ple() & gvw\_ple() & mzd\_ple() & gvw\_ple() \\ \hline

$10, 000 \times 10, 000$ & 0.378 & 0.382 & 0.345 & 0.354 \\

$10, 000 \times 30, 000$ & 1.342 & 1.301 & 1.268 & 1.262 \\

$30, 000 \times 10, 000$ & 1.432 & 1.443 & 1.403 & 1.418 \\

$30, 000 \times 30, 000$ & 7.661 & 7.655 & 7.604 & 7.577 \\

$30, 000 \times 60, 000$ & 18.684 & 18.671 & 18.651 & 18.634 \\

$60, 000 \times 30, 000$ & 19.396 & 19.296 & 19.282 & 19.298 \\

$60, 000 \times 60, 000$ & 58.373 & 58.636 & 54.509 & 54.263 \\

$60, 000 \times 100, 000$ & 123.321 & 123.298 & 119.479 & 122.523 \\

$100, 000 \times 60, 000$ & 119.991 & 118.388 & 108.565 & 108.501 \\

$100, 000 \times 100, 000$ & 266.817 & 267.191 & 237.401 & 237.560 \\

$150, 000 \times 150, 000$ & 817.682 & 817.750 & 700.032 & 700.781 \\

\hline
\end{tabular}\\
\caption{\small mzd\_ple() vs gvw\_ple()}
\end{table}
}

From the above table, we can see the function $mzd\_ple()$ and $gvw\_ple()$ almost have the same efficiency. So the new swapping method presented in the last section does not slow down the efficiency of elimination.

In Table 2, we compare our implementations of the algorithm GVW and M-GVW. The size of maximal matrices generated during the computation and the timings are given in details, and the numbers of mutants pairs are also listed. In both  implementations, orderings of signatures are both position over term extensions of the Graded Reverse Lexicographic ordering. In M-GVW,
the parameter  {\rm  Deg-Limit} is set to 4.
In the column of Exam.,  $n\times n$  means that the input polynomial system has   $n$ polynomials with $n$ variables.
These  square polynomial systems were generated by   Courtois in  \citep{Courtois13}.  The left three  HFE systems are from \citep{Steel04}.
The Computer we used is MacBook Pro with 2.6 GHz Intel Core i7, 16 GB memory.

{
\begin{table}[H]\centering

\medskip
\begin{tabular}{c | c c | c c c}
\hline
Exam. & \multicolumn{2}{c}{GVW} &  \multicolumn{3}{c}{M-GVW}\\

& max mat.(deg) & time(sec) & max mat.(deg) & time(sec) & mutant pairs \\ \hline

$16 \times 16$ & $9378 \times 6861(5)$ & 0.560  & $9378 \times 6861(5)$ & 0.543 & 0  \\ %300

$17 \times 17$ & $12012 \times 9354(5)$ & 0.893  & $12012 \times 9354 (5)$ & 0.895 & 0 \\ %300

$18 \times 18$ & $15240 \times 12569(5)$  & 1.556 & $15240\times 12569(5)$ &  1.588 & 0 \\ %300

$19 \times 19$ & $19043 \times 16613 (5)$  & 2.742 & $19043 \times 16613 (5)$ &  2.728 & 0 \\ %300

$20 \times 20$ & $23478 \times 21640(5)$ & 4.676 & $23478 \times 21640(5)$  & 4.664 & 0  \\ %300

$21 \times 21$ & $28718 \times 27839 (5)$ & 14.991 & $28718 \times 27839 (5)$  & 8.226 & 924 \\ %300

$22 \times 22$ & $34777 \times 35383(5)$ & 28.947 & $34777 \times 35383(5)$  & 28.840 & 0  \\%300

HFE\_25\_96 & $63341 \times 68298(5)$ & 131.806 & $14271\times 15222(4)$ & 3.418 & 300 \\ %300

HFE\_30\_96 & $141708 \times 174303(5)$  & 1504.289  & $22745\times 31861(4)$ & 15.168 & 360 \\ %500

HFE\_35\_96 & $248520\times 383915(5)$  & $>1h$  & $33929\times 59410(4)$ & 57.988 & 420 \\ %1

\hline
\end{tabular}\\
\caption{\small GVW vs M-GVW}
\end{table}
}

From this table, we can find that the maximal  size of the matrix generated during the computations are  exactly the same  for Courtois' examples except $21 \times 21$,
 and the corresponding computing time are also almost the same. This is  because we can not  find mutant polynomials with degree lower than 4 for  these examples.
 It is a little bit surprised that we find 924 mutant pairs with degree smaller than 4 in the example $21 \times 21$.  For the HFE examples, M-GVW performs much better than GVW  because many mutant polynomials have been found in M-GVW and the maximal size of the matrix in M-GVW become much smaller than that in GVW, which  leads that M-GVW cost less computing time.

We also test the same examples as in Table 2  for  M-GVW and some intrinsic implementations on public softwares, including Gr\"obner basis functions on Maple (version 17, setting ``method = fgb''), Singular (version 3-1-6),
and Magma (version 2.12-16)\footnote{Magma 2.12-16 is an old version, and we are trying to buy the latest one. }, and the computing times in seconds are listed in Table 3.

{
\begin{table}[H]\centering

\medskip
\begin{tabular}{c c c c c}
\hline
Exam. & Maple & Singular & Magma & M-GVW\\ \hline

$16 \times 16$ & 4.088 & 5.210 & 0.484 &  0.543 \\

$17 \times 17$ & 9.891 & 12.886 & 0.874 & 0.895\\

$18 \times 18$ & 22.340 & 31.590 & 1.513 &  1.588 \\

$19 \times 19$ & 48.314 & 84.771 & 2.792 & 2.728\\

$20 \times 20$ & 107.064 & 265.325 & 5.226 & 4.664   \\

$21 \times 21$ & 218.479 & 724.886 & 10.468 & 8.226\\

$22 \times 22$ & 839.067 & $>1h$  & 37.144 & 28.840\\

HFE\_25\_96 & 121.681 & $>1h$ &  7.675 & 3.418\\

HFE\_30\_96 & 619.745 & $>1h$ & 29.172  & 15.168  \\

HFE\_35\_96 & 2229.239 & $>1h$ &  out mem. &  57.988 \\
\hline
\end{tabular}\\
\caption{\small  Maple, Singular and Magma vs M-GVW}
\end{table}
}

From the above table, we can see that, due to the efficiency of routines from M4RI, our implementation of M-GVW is more efficient than some of functions from existing public softwares. However, since the matrices in large polynomial systems become quite sparse, our implementation may not perform very good for large systems at present.

%Besides, as our implementation has no special optimizations on HFE systems, we set the  {\em HFE} parameter to false when calling {\em GroebnerBasis} in Magma. %Magma stops, when it uses more than 2 GB memories.

%From the above tables, we can see that our M-GVW is very efficient.
%We think one important reason is that, algorithms for eliminating dense matrices are more
%efficient than sparse elimination methods when sizes of matrices are not huge because of the space-time tradeoff.

\section{Conclusions} \label{sec_conclusion}
In this paper, we present M-GVW to avoid criteria rejecting J-pairs with lower degrees. M-GVW is exactly the same as GVW when input systems are homogeneous,
but have a better performance when input systems are inhomogeneous.
Due to the efficient routines from M4RI, we also give an efficient implementation of M-GVW using linear algebra over boolean polynomial rings. We think our implementation can be optimized further, and we will try to use sparse linear algebra to improve the performance of M-GVW in the future.

\end{document}